\pgfplotsset{width=5\columnwidth /5, compat = 1.13,
	height = 60\columnwidth /100, 
	legend cell align = left, ticklabel style = {font=\scriptsize},
	every axis label/.append style={font=\small},
	legend style = {font={\scriptsize}},title style={yshift=-7pt, font = \small} }
\pgfplotsset{compat=newest}
\newtheorem{assumption}{Assumption}
\newtheorem{definition}{Definition}
\newtheorem{lemma}{Lemma}
\newtheorem{theorem}{Theorem}
\newacronym{gp}{GP}{Gaussian process}
\newacronym{mas}{MAS}{multi-agent system}
\newacronym{elmas}{ELMAS}{Euler-Lagrange multi-agent system}
\newcommand{\CASE}[1]{\STATE \textbf{case} #1\textbf{:} \begin{ALC@g}}
\newcommand{\ENDCASE}{\end{ALC@g}}
\newcommand{\DEFAULT}{\STATE \textbf{default:} \begin{ALC@g}}
\newcommand{\ENDDEFAULT}{\end{ALC@g}}
\newcommand{\DEFAULTLINE}[1]{\STATE \textbf{default:} }
\definecolor{mycolor1}{RGB}{166,206,227}
\definecolor{mycolor2}{RGB}{31,120,180}
\definecolor{mycolor3}{RGB} {178,223,138}
\definecolor{mycolor4}{RGB}{51,160,44}
\title{\LARGE \bf
Cooperative Learning with Gaussian Processes for Euler-Lagrange Systems Tracking Control under Switching Topologies 
}
\author{Zewen Yang$^{1,2}$, Songbo Dong$^{2}$, Armin Lederer$^{2}$, Xiaobing Dai$^{2}$, Siyu Chen$^{3}$, Stefan Sosnowski$^{2}$, \\Georges Hattab$^{1,4}$ and Sandra Hirche$^{2}$ 
 \thanks{Preprint.}
\thanks{$^{1}$ Robert Koch Institute, Berlin, Germany. $^{2}$ Technical University of Munich, Munich, Germany. $^{3}$ The University of Texas at Austin, Austin, TX, USA. $^{4}$ Freie Universität Berlin, Berlin, Germany.
{Correspondence to: Zewen Yang $<$yangz@rki.de$>$}}%
}
\begin{document}

\maketitle
\thispagestyle{empty}
\pagestyle{empty}

\begin{abstract}
This work presents an innovative learning-based approach to tackle the tracking control problem of Euler-Lagrange multi-agent systems with partially unknown dynamics operating under switching communication topologies. The approach leverages a correlation-aware cooperative algorithm framework built upon Gaussian process regression, which adeptly captures inter-agent correlations for uncertainty predictions. 
A standout feature is its exceptional efficiency in deriving the aggregation weights achieved by circumventing the computationally intensive posterior variance calculations.
Through Lyapunov stability analysis, the distributed control law ensures bounded tracking errors with high probability. Simulation experiments validate the protocol's efficacy in effectively managing complex scenarios, establishing it as a promising solution for robust tracking control in multi-agent systems characterized by uncertain dynamics and dynamic communication structures.
\end{abstract}


\section{Introduction}
Multi-agent systems (MASs) have attracted significant attention within the field of control due to their ability to collaboratively achieve overarching objectives \cite{ohSurveyMultiagentFormation2015}. Although many studies concentrate on linear agent dynamics, the application of control methods developed for linear systems is inadequate for intricate physical systems. In this paper, we delve into Euler-Lagrange MASs, which serve as a modeling framework for diverse physical systems like robotic manipulators \cite{gao2021quasi} and underwater vehicles \cite{yanNovelPathPlanning2018}.

Accomplishing intricate tasks in MASs often involves extensive investigations of tracking control protocols, as evident in numerous studies \cite{yangDistributedGlobalOutputFeedback2017, heLeaderFollowingConsensusMultiple2021,YAN2019361,yanVirtualLeaderBased2020}. 
Many of these approaches assume a prior understanding of system dynamics and environmental disturbances, which poses a significant constraint when dealing with uncertain MASs operating in unfamiliar environments. To overcome this challenge, there has been a surge in interest toward learning-based control methods that leverage collected data to infer uncertainties inherent to the environment. Particularly in the context of safe control tasks amidst uncertainties, Gaussian process regression (GPR) \cite{rasmussenGaussianProcessesMachine2006} has emerged as a popular choice for modeling the effects of unknown environmental factors on system dynamics. GPR's appeal lies in its robust expressive capabilities, a theoretical foundation that accommodates statistical prediction error bounds, and an intrinsic trade-off between bias and variance \cite{umlauftFeedbackLinearizationBased2020}. Nonetheless, the efficacy of predictions of GPR is notably sensitive to the quantity of available training data \cite{dai2023can}. Employing a large dataset can significantly escalate computational demands, thereby impeding the real-time applicability of GPR in control tasks.

Various techniques have been developed to mitigate the training and prediction complexity of GPR, which increases cubically with the number of training samples. These methods include inducing point techniques \cite{pmlr-v2-snelson07a}, finite feature approximations \cite{NEURIPS2018_4e5046fc}, and aggregation strategies \cite{deisenrothDistributedGaussianProcesses2015}, while these methods primarily focus on single-agent systems. 
The concepts from aggregation techniques have been extended to enable cooperative learning using GPR within MASs. 
In \cite{beckers2021online,dai2024DecentralizedEventTriggered}, the event-triggered learning-based incorporating GPR allows individual learning in each agent, such that it provides probabilistic guarantees for safe consensus control. 
Despite its advantages, this method overlooks the potential benefit of information exchange between locally learned models. 
The previous works \cite{yangDistributedLearningConsensus2021,ledererCooperativeControlUncertain2023,dai2024cooperative} propose cooperative learning approaches, where the agents aggregate the predictions from their neighboring agents. While this approach achieves accurate predictions, it necessitates the additional computations of Gaussian process posterior variances for determining aggregation weights or optimized parameters~\cite{Hoang_Hoang_Low_How_2019,heDistributedOnlineSparse2023}. Although \cite{yangAAMAS,yangAAMASextend} proposed elective learning for mitigating the computational burden of joint prediction, the proposed method requires prior knowledge, which may not be available for certain systems.
In this paper, we present a collaborative learning framework based on cooperative GPR offering computational efficiency, while still maintaining the established theoretical bound on tracking error for the MAS control.

The contribution of this paper is in the form of a fresh approach to cooperative learning for distributed control rooted in the GPR technique and is designed to address uncertainty in Euler-Lagrange multi-agent systems (ELMAS). The novel learning framework, named cooperative correlation-aware GP (Cora-GP), leverages established aggregation methods while bypassing the need for calculating GP posterior variances. We provide two computationally efficient strategies for realizing the Cora-GP approach and incorporate them into a distributed consensus tracking control law. The effectiveness of the resulting control laws is formally shown using convergence guarantees for the tracking error of the ELMAS and demonstrated numerically in simulations. Notably, this convergence is achieved in the proximity of the origin, even within semi-Markov switching communication topologies.

The remainder of this article is structured as follows: Preliminaries and the problem formulation are stated in \cref{sec_P}. In \cref{sec_DistributedLearning}, the novel correlation-aware GP approach is presented. The learning-based protocol for consensus tracking control of the ELMAS is proposed, and stability for the resulting closed-loop MASs is proven in \cref{sec_ConsensusTracking}. A numerical simulation demonstrates the effectiveness of the proposed approach in \cref{sec_Simulation}, followed by a conclusion.


\section{Preliminaries and Problem Formulation} \label{sec_P}

\subsection{Notation and Graph Theory}
We denote real positive numbers without/with zero as $\mathbb{R}_{+} / \mathbb{R}_{0,+}$, naturals without/with zero as $\mathbb{N} / \mathbb{N}_{0}$, respectively. If not stated otherwise, identity matrix, null vector and vector or matrix of elements 1 are denoted by $ \bm{I}, \mathbf{0}$ and $\mathbf{1}$ with appropriate size, respectively. The Euclidean norm of a vector or matrix is denoted by $\|\cdot\|$, the cardinality of a set $\mathcal{N}$ is represented as $\left | \mathcal{N} \right |$, and the Kronecker product is indicated by $\otimes$. Minimum/maximum singular values of a matrix are denoted by~$\underline{\sigma}(\cdot)$/$\bar{\sigma}(\cdot)$. Matrix $\bm{A} \succ {0}$, if $\bm{A}$ is a positive definite matrix. The operation $\mathrm{blkdiag}()$ returns a block diagonal matrix created by aligning the input matrices.

In this paper, we use a digraph $ \mathcal{G} =  ( \mathcal{V} ,\mathcal{E} ) $ to describe the communication among the EL agents, where $ \mathcal{V} = \{ 1,\dots, n \} $ denotes the set of nodes, and $ \mathcal{E}\subseteq \mathcal{V} \times \mathcal{V} $ denotes the set of edges. A directed edge $\left ( i, j \right ) $ indicates that the $i$-th agent receives the information from $j$-th agent. The weighted adjacency matrix of $\mathcal{G}$ is denoted by $\boldsymbol{A} = \left [a_{ij} \right ]\in  \mathbb{R}^{n\times n} $, where an adjacency entry $a_{ij}>0$ if $\left (j,i \right )  \in \mathcal{E} $ and $ a_{ij}=0 $ otherwise. Moreover, it is assumed that the diagonal entries of the matrix $\bm{A}$ are zero, which implies $a_{ii}=0,~ \forall i \in \mathcal{V}$. Furthermore, we define the self-loop included adjacency matrix $\check{\boldsymbol{A}}= \left [\check{a}_{ij} \right ]\in  \mathbb{R}^{n\times n} $ with entry $\check{a}_{ij} = 1$ if $a_{ij}>0 $, $\check{a}_{ii} = 1$ and $ a_{ij}=0 $ otherwise. The Laplacian matrix of a digraph is defined as ${\bm{L}} = {\bm{D}} -{\bm{A}}$, where ${\boldsymbol{D}}=\operatorname{diag}({d}_{11}, {d}_{22}, \ldots {d}_{n n})$ with $ {d}_{ii} = \sum_{j=1}^{n} {a}_{ij}$ is the degree matrix of graph ${\mathcal{G}}$. The set of neighbours of agent $i$ is represented by $\mathcal{N}_i= \{j\in \mathcal{V}:(j,i)\in {\mathcal{E}}\}$. Similarly, let $\bar{\mathcal{G}}=(\bar{\mathcal{V}},\bar{\mathcal{E}})$ be the digraph of the leader-follower agents with the node set $ \bar{\mathcal{V}} = \left \{ 0 \right \} \cup \mathcal{V}$ and the edges set $  \bar{\mathcal{E}}\subseteq \bar{\mathcal{V}} \times \bar{\mathcal{V}} $, where the virtual leader is denoted by node $0$. The Laplacian matrix of $\bar{\mathcal{G}}$ is denoted as
\begin{align}
    \bar{\bm{L}}= \begin{bmatrix}
        0 & \bm{0}_{1\times n}\\
        \bm{L}_0 & \tilde{\bm{L}}
    \end{bmatrix} \in \mathbb{R}^{(n+1)\times (n+1)}, \nonumber
\end{align}
where $\bm{L}_0 = [a_{i0}]_{i=1,\dots,n} \in\mathbb{R}^{n}$, $\tilde{\bm{L}} =  \tilde{\bm{D}} -{\bm{A}}$, and the diagonal matrix $\tilde{\bm{D}}=\operatorname{diag}( \tilde{d}_{11}, \ldots \tilde{d}_{n n})$ with $ \tilde{d}_{ii} = \sum_{j=0}^{n} a_{ij}$.


\subsection{Stochastic Communication Topology}
In this paper, we consider a class of time-varying stochastic topologies (graphs), which is described by a semi-Markov process. This class of time-varying topologies is based on a set of fixed topologies $\bar{\mathcal{G}}_r$, where the index $r$ belongs to the finite state space $\mathcal{P}=\{1,2,\ldots, N\}$ with well-defined $N \in \mathbb{N}$. Let $r(t)=r_k\in\mathcal{P}$, $t\in \mathbb{R}_{0,+}$, denote the index of the topology at the $k$-th time interval $t \in [t_k, t_{k+1})$, where $k\in\mathbb{N}_0$. Then, the sojourn (holding) time at state $r_k$ is denoted by $\tau_{k}=t_{k+1}-t_{k}$ \cite{guoScaledConsensusProblem2021}. Based on this notation, we define semi-Markov time-varying topologies as follows.

\begin{definition}
Consider a stochastic process $\{r(t)\}_{t\in \mathbb{R}_{0,+}}$ indexing a set of fixed topologies $\bar{\mathcal{G}}_{r(t)}$, where $r(t)\in\mathcal{P}$. Let the process have step-wise trajectories with jumps at times $t_k, k \in \mathbb{N}$ satisfying $0<t_{1}<t_{2}<\cdots<t_{n}<\cdots$, such that the sequence of topology indexes $r(t_k)$ satisfies the Markov property, i.e., the probability $\Pr\{r(t_{k+1})=r_{k+1} \mid r(t_0)=r_{0},~ r(t_1)=r_{1}, \cdots, r(t_k)=r_{k}\} = \Pr\{r(t_{k+1})=r_{k+1} \mid r(t_k)=r_{k}\}$ for all $r_{0},r_{1},\cdots,r_{k},r_{k+1} \in \mathcal{P}$.
Moreover, let the distributions of the holding time $\tau_{k}$ be described in terms of distribution functions $ F_{ij}(\tau_k)$ via $ \Pr\left\{t_{k+1}-t_{k} \leq \tau_k, r\left(t_{k+1}\right)=j \mid r\left(t_{k}\right)=i\right\}=\mathrm{P}_{ij}  F_{ij}(\tau_k)$. The probabilities $\mathrm{P}_{ij}=\Pr\{r(t_{k+1})=j \mid r(t_k)=i\}$ define a transition probability matrix $\mathbf{P}_r=[\mathrm{P}_{ij}]\in\mathbb{R}^{N\times N}$ with $\mathrm{P}_{ii} = 0$, where $i,j\in\mathcal{P}$. Then, $r(t)$ describes the topology indexes of a semi-Markov time-varying topology.
\end{definition}
The behavior of semi-Markov time-varying topologies can be intuitively described as follows. Once a topology $\bar{\mathcal{G}}_{r_k}$ is chosen, it remains constant for the sojourn time $\tau_k$. The sojourn time $\tau_k$ is a random variable itself with probability distributions $F_{ij}$, which depend on the current topology and the next topology. When the topology is switched, the new topology $\bar{\mathcal{G}}_{r_{k+1}}$ is sampled from the discrete probability distribution $\mathrm{P}_{ij}, j \neq i$. Since the edges in this stochastic communication topology change over time, it allows more realistic modeling of wireless communication networks, where the connection between two agents can break down. 

In order to ensure that a distributed algorithm coordinating the agents can work properly, the communication topology has to ensure sufficient connectivity among the agents over time \cite{dongTimeVaryingFormationTracking2017, guoScaledConsensusProblem2021}, which requires the following assumptions.

\begin{assumption}
\label{ass_switchinggraph}
At every time $t$, the communication topology $\bar{\mathcal{G}}_{r(t)}$ contains a spanning tree with the root node being the leader node 0.
\end{assumption}
Assumption \ref{ass_switchinggraph} ensures that the switching graph has at least a path from the leader to some agents. In switching systems, this assumption is common since it is essential for followers to track the leader\cite{huaAdaptiveLeaderFollowingConsensus2017}. In addition, an assumption for the transition probability matrix $\mathbf{P}_r$ needs to be imposed.

\begin{assumption} \label{ass_irr}
The transition probability matrix $\mathbf{P}_r$ is irreducible.
\end{assumption}
This Assumption \ref{ass_irr} ensures that all states intercommunicate, i.e., there exists a positive probability that allows transitioning between any pair of states within finite steps.

\subsection{Euler-Lagrange Multi-agent System}
In this paper, we consider an ELMAS consisting of $n$ homogenous follower agents, referred to as agents in the following, and one virtual leader. In particular, the dynamics of the $i$-th agent in the ELMAS is described as
\begin{equation}\label{dyn_agent}
        \boldsymbol{H}(\boldsymbol{q}_i) \ddot{\boldsymbol{q}}_i+\boldsymbol{C}(\boldsymbol{q}_i, \dot{\boldsymbol{q}}_i) \dot{\boldsymbol{q}}_i+\boldsymbol{g}(\boldsymbol{q}_i)+\boldsymbol{f}(\boldsymbol{p}_i)=\boldsymbol{u}_i,~i\in \mathcal{V},
\end{equation}
where $\boldsymbol{q}_i = [{q}_{1},{q}_{2}, \dots, {q}_m]^\top  \in \mathbb{X} \subset \mathbb{R}^{m}$ is the state of the $i$-th agent, $\boldsymbol{u}_i = [{u}_{1},{u}_2, \dots, {u}_m]^\top\in \mathbb{R}^{m}$ is the control input, and $\boldsymbol{p}_i:=[\boldsymbol{q}_i^\top,\dot{\boldsymbol{q}}_i^\top,\ddot{\boldsymbol{q}}_i^\top]^\top$. The functions $\boldsymbol{H}(\cdot): \mathbb{R}^{m}\rightarrow\mathbb{R}^{m\times m} $, {$\boldsymbol{C}(\cdot): \mathbb{R}^{m} \times \mathbb{R}^{m} \rightarrow\mathbb{R}^{m\times m} $ and $\boldsymbol{g}(\cdot): \mathbb{R}^{m}\rightarrow\mathbb{R}^{m}$ denote the inertia matrix, Coriolis matrix and the gravity vector. Since they can be easily identified using well-known techniques from robotics \cite{spongRobotModelingControl2020}, we assume them to be known in the sequel. The function $\boldsymbol{f}(\cdot)=[{f}_1(\cdot),\dots,{f}_m(\cdot)]^\top: \mathbb{R}^{3m}\rightarrow\mathbb{R}^{m} $ is assumed to be unknown, but identical in all agents. This setting can be found in a scenario where a homogeneous fleet of autonomous robots operates in an unknown environment interfering with the robot dynamics, e.g.,  hydrodynamic forces caused by ocean currents acting on underwater vehicles. The control task is to track a virtual leader, whose dynamics follows a prescribed reference trajectory $\boldsymbol{f}_r(\cdot):\mathbb{R}_{0,+}\rightarrow\mathbb{R}^m$, which yields \looseness=-1
\begin{equation} \label{dyn_laeder}
        \boldsymbol{q}_0 = \boldsymbol{f}_r(t),
\end{equation}
where $\boldsymbol{q}_0 $ is the state of the virtual leader. To ensure each agent can follow the leader, we pose the following assumption on the reference trajectory $\boldsymbol{f}_r(\cdot)$.
\begin{assumption}\label{ass_leader}
The reference trajectory $\bm{f}_r$ is at least twice continuously differentiable and $\|\dot{\bm{f}}_r(t)\|\leq \bar{f}_r$, $\bar{f}_r \in \mathbb{R}_{0,+}$.
\end{assumption}
This assumption is common for the control of Euler-Lagrange systems as it allows tracking a reference using control techniques such as feedback linearization or computed torque control\cite{ledererCooperativeControlUncertain2023}. Moreover, since the reference trajectory is a design choice, it is not restrictive in practice.

To infer a data-driven model of the unknown function $\bm{f}$, we assume the availability of measurements of $\bm{f}(\cdot)$ in each agent. These measurements satisfy the following conditions.

\begin{assumption}\label{ass_data}
    Each agent $i$ has access to a training data set $\mathcal{D}_i=\big\{\big(\bm{p}^{(\vartheta )}_i, \bm{y}^{(\vartheta )}_i\big)\big\}_{\vartheta =1,\ldots,M_i }$ consisting of $M_i\in\mathbb{N}$ measurement pairs  $(\bm{p}_i^{(\vartheta)},\bm{y}_i^{(\vartheta)}=\boldsymbol{H}(\boldsymbol{q}_i) \ddot{\boldsymbol{q}}_i+\boldsymbol{C}(\boldsymbol{q}_i, \dot{\boldsymbol{q}}_i) \dot{\boldsymbol{q}}_i + \boldsymbol{g}(\boldsymbol{q}_i) - \boldsymbol{u}_i+\bm{\zeta}^{(\vartheta)})$, where $\boldsymbol{y}_i = [{y}_{i1}, \dots,y_{im}]^\top$, $\bm{\zeta}$ is an independent, identical, zero mean Gaussian noise with covariance matrix $\sigma_o^2\bm{I}$. 
\end{assumption}
This assumption allows each agent to have its own independently collected data set without the necessity to share data between them directly. It also necessitates comprehensive measurements of the system states, a common requirement in data-driven control methods, e.g., \cite{umlauftUncertaintyBasedControlLyapunov2018,umlauftFeedbackLinearizationBased2020, Greeff2021}. To address potential measurement noise in GPR, we may transfer noise in the output variable by utilizing Taylor expansion techniques \cite{kim2023model}, or incorporate noise directly into the kernel function \cite{wang2022gaussian}. 
Employing these strategies, the inputs for the GP model can still be treated as effectively noise-free. 

Based on the distributed data sets under \cref{ass_data}, we consider the problem of designing a distributed control law for tracking the virtual leader state $\bm{q}_0$ with the agent states $\bm{q}_i$. Due to the unavailability of the exact dynamics, we cannot expect to achieve exact tracking with asymptotic stability. However, the tracking error of each agent $\bar{\bm{e}}_i \in \mathbb{R}^{2m}$, which is defined as
\begin{equation}
    \bar{\bm{e}}_i(t)= [\bm{e}_i(t)^\top, \dot{\bm{e}}_i(t)^\top]^\top,~\forall i \in \mathcal{V}
\end{equation}
where $\boldsymbol{e}_i = \boldsymbol{q}_i - \boldsymbol{q}_0$, is expected to converge to a small value. This is formalized using the following notion of stability.

\begin{definition}\label{Consensus_tracking}
An ELMAS consisting of $n$ agents achieves consensus tracking if there exists a compact set $\Omega_e \subset \mathbb{R}^{2m}$ containing the origin, so that for all $i \in \mathcal{V}$, $\forall \bar{\bm{e}}_i(0) \in \Omega_e$, there exists a small constant $\psi$ and a finite time $T_e\in\mathbb{R}_+$, such that the tracking error satisfies $\|e_i(t)\|\leq \psi, \forall t\geq T_e$.
\end{definition}

\section{Distributed Learning with GPs}\label{sec_DistributedLearning}
\subsection{Individual Learning}\label{subsec_individualLearning}
A Gaussian Process $\mathcal{GP}(m_{gp}(\bm{p}),k(\bm{p},\bm{p}^{\prime}))$ is a stochastic process where any finite subset of the observations of variables $\{\bm{p}^{(1)}\cdots\bm{p}^{(M)} \}$ is assigned a joint Gaussian distribution defined by a prior mean $m_{gp}(\cdot): \mathbb{R}^{3m} \rightarrow \mathbb{R}$ and a covariance function $k(\cdot): \!\mathbb{R}^{3m} \times \mathbb{R}^{3m} \rightarrow \mathbb{R}_{0,+}$  \cite{rasmussenGaussianProcessesMachine2006}. The prior mean can be used to include approximate models in the regression, and the covariance function reflects structural prior knowledge such as smoothness or periodicity. When no specific structure is known a priori, a frequently used covariance function is the ARD squared exponential kernel $k\left(\bm{p}, \bm{p}^{\prime}\right)=\sigma_{r}^{2} \exp \big(-\frac{1}{2} \sum_{j=1}^{m} {l_{j}}^2(p_{j}-p_{j}^{\prime})^2\big),$
where $\sigma_r\in\mathbb{R}_+$ and $l_j\in\mathbb{R}_+$ are so called hyper-parameters.  

Given a data set satisfying \cref{ass_data}, Gaussian process regression is performed by conditioning the prior GP defined by $m_{gp}(\cdot)$ and $k(\cdot,\cdot)$ on the data set $\mathcal{D}_i$ considering the $i$-th agent with $M_i$ training data pairs. Without loss of generality, we set the prior mean to $0$. Due to the assumption of Gaussian noise, the posterior distribution is again Gaussian. Considering scalar systems, i.e., $m=1$, the posterior has a mean and variance function \cite{rasmussenGaussianProcessesMachine2006}
\begin{align*}
    \mu_i(\bm{p})\! &= \!\bm{k}(\bm{P}_i,\bm{p})^\top (\boldsymbol{K}(\bm{P}_i,\bm{P}_i)+\sigma_o^2\bm{I}_{M_i})^{-1}\bm{Y}_i, \\
    \sigma^2_i(\bm{p}) \! &=\! k(\bm{p},\bm{p})\!-\! \bm{k}(\bm{P}_i,\bm{p})^{\!\top}\!({\bm K}(\bm{P}_i,\bm{P}_i)\!+\!\sigma_o^2\bm{I}_{M_i})^{\!-1}\bm{k}(\bm{P}_i,\bm{p})\!,
\end{align*}
respectively, where 
\begin{equation*}
    \bm{k}(\bm{P}_i,\bm{p}) =  [k(\bm{p}_i^{(1)}, \bm{p}), \dots, k(\bm{p}_i^{(M_i)}, \bm{p})]^\top,
\end{equation*}
the matrix ${\bm K}(\bm{P}_i,\bm{P}_i) = [k(\bm{p}_i^{(a)}, \bm{p}_i^{(b)})]_{a,b = 1,\dots,M_i}$, the training data $\bm{P}_i=[\bm{p}_i^{(1)}\ \ldots\ \bm{p}_i^{(M_i)} ]$, and $\bm{Y}_i=[{y}_i^{(1)}\ \ldots\ {y}_i^{(M_i)} ]^\top$.

In order to apply Gaussian process regression to systems with dimension $m>1$, we consider an independent Gaussian process for each dimension. Under the assumption of equal hyper-parameters for each dimension, the multi-output prediction can then be efficiently computed using
\begin{align}\label{eq:indmean}
    \bm{\mu}_i(\bm{p})&=\bar{\bm k}(\bm{P}_i, \bm{p})^\top \bar{\bm{K}}(\bm{P}_i, \bm{P}_i) \bar{\bm{Y}}_i, \\
    \bm{\sigma}_i^2(\bm{p}) &= \mathbf{1}_m\sigma^2_i(\bm{p}),
\end{align}
where the matrix $\bar{\bm k}(\bm{P}_i, \bm{p}) = \bm{I}_{m}\otimes\bm{k}( \bm{P}_i, \bm{p})$,  $\bar{\bm{K}}(\bm{P}_i, \bm{P}_i)= \bm{I}_{m}\otimes (\boldsymbol{K}(\bm{P}_i,\bm{P}_i)+\sigma_o^2\bm{I}_{M_i})^{-1}$, and training data $\bar{\bm{Y}}_i=[\bm{y}_i^{1}, \dots, \bm{y}_i^{m}]^\top$ with $ \bm{y}_i^{j} = [{y}_{ij}^{(1)},\dots,{y}_{ij}^{(M_i)}],~j=1,2,\dots,m $.

While Gaussian process regression is known to have many beneficial properties for practical usage, it suffers crucially from high computational complexity \cite{rasmussenGaussianProcessesMachine2006}. This is particularly problematic for the posterior variance ${\sigma}^2(\cdot)$, which requires $\mathcal{O}(M_i^2)$ computations for on-line evaluation, even if $(\boldsymbol{K}(\bm{P}_i,\bm{P}_i)+\sigma_o^2\bm{I}_{M_i})^{-1}$ is pre-computed off-line with a complexity of $\mathcal{O}(M_i^3)$. The considerable complexity associated with utilizing posterior variance in control schemes, particularly in distributed learning settings, can often lead to its exclusion from practical usage. As a result, an alternative and more efficient approach is investigated in this paper to achieve cooperative learning objectives without compromising performance. \looseness =-1

\subsection{Cooperative Learning with Correlation-Aware GPs}
As GPR suffers from this inherent computational burden, distributed computing is a promising method. For realizing an effective aggregation, the different predictions' importance must be taken into account by adapting the aggregation weights. This leads to a dependency of the aggregation weights on the posterior variance, which means that each agent $i$ suffers from a $\mathcal{O}(M_i^2)$ for each prediction. In order to address this shortcoming of existing methods, we propose a correlation-aware GP (Cora-GP) algorithm. This approach aggregates the predictions of neighboring agents similarly to existing approaches but employs the prior covariance $\bm{k}(\bm{P},\bm{p})$ between a test point $\bm{p}$ and the training input $\bm{P}$ to determine the aggregation weight of each agent. Therefore, the proposed algorithm sidesteps computing the posterior variance of GPs.

We consider the $i$-th agent with local multi-output GP is trained with the set ${\mathcal{D}}_i$ with $M_i$  pairs of training inputs $\bm{P}_i$ and training outputs $\bm{Y}_i$. Moreover, let $M_{i}^{\min} = \operatorname{min} \{ {M}_{l}| l\in\mathcal{N}_i \}$  be the minimum number of training samples of neighbors of the $i$-th agent. We propose to compute the $j$-th dimension aggregated posterior mean of the multi-output GP of the $i$-th agent
\begin{align}\label{mu_Cora-GP}
\tilde{\mu}_{ij}\left ( \bm{p} \right )  =  \bm{h}_i(\bm{p})^\top \bm{\mu}^{j} \left ( \bm{p} \right ), ~i=1,\dots,n,~j=1\dots,m,
\end{align}
where $\bm{\mu}^{j}(\bm{p}) = [{\mu}_{1j}(\bm{p}),\dots,\mu_{nj}(\bm{p}) ]^\top$. The aggregation weight function $\bm{h}_i(\cdot)\!:\mathbb{R}^{3m}\!\rightarrow \!\mathbb{R}^n$ is defined as
\begin{align}
\label{function_h} {h}_{id}( \bm{p}) &=\begin{cases}
 \frac {{w}_{id}(\bm{p}) }{\sum_{l=1}^{\left | \mathcal{N}_i \right |} {w}_{il}(\bm{p})},  &d \in \mathcal{N}_i\\
0,  &\text{otherwise}
\end{cases},
\end{align}
and the function ${w}(\cdot): \mathbb{R}^{3m} \to \mathbb{R}_+$ is calculated by
\begin{align}
   \label{w_il} w_{il}(\bm{p} )&=
     \frac{\check{a}_{il}(r(t))}{\sigma_{g_i} \sqrt{2\pi}} \exp \Big(- {\Big(\frac{  \| \bm{s}_{il}( \bm{p}) \|}{\tilde{s}_{i}(\bm{p})}-\bar{w}_i \Big)^2}/{2\sigma_{g_i}^2}\Big),
\end{align}
where $l= 1,\dots, n$. The factor $\sigma_{g_i}\in \mathbb{R}_+$ and the parameter
\begin{equation}
\label{bar_w_i}\bar{w}_i\!=\!\operatorname{max} \left\{ \frac{\check{a}_{i1}(r(t))\| \bm{s}_{i1}( \bm{p}  ) \|}{\tilde{s}_{i}(\bm{p})} \cdots \frac{\check{a}_{in}(r(t))\| \bm{s}_{in}( \bm{p}  ) \|}{\tilde{s}_{i}(\bm{p})} \right\},\!
\end{equation}
are the standard deviation and the expected value of the Gaussian distribution \eqref{w_il}, respectively, with $\tilde{s}_{i}(\bm{p}) =  \sum_{l=1}^{\mathcal{N}_i} \check{a}_{il}(r(t))\| \bm{s}_{il}( \bm{p})\|$, which ensures that the maximum value of the correlation function $\bm{s}_{il}(\cdot)$ is the expected value. The correlation function $\bm{s}_{il}(\cdot)$ associated with $\bm{k}(\bm{P}_l,\bm{p} )$ for the $i$-th agent evaluates the correlation between the query point $\bm{p}$ and the training data $\bm{P}_l$. 

Leveraging the results of $\bm{k}(\bm{P}_l,\bm{p} )$, we present the first approach correlation-aware GP with top element (Cora-GP-Top), the function $\bm{s}_{il}(\cdot): \mathbb{R}^{3m} \to \mathbb{R}^{M_i^{\min}}$, considering the training data set $\mathcal{D}_l$ comprises $M_l$ data pairs, is designed as follows
\begin{equation} \label{eqn_Cora-GP_Top}
    \bm{s}_{il}( \bm{p}  ) = \text{Top}( \bm{k}(\bm{P}_l,\bm{p} ), M_i^{\min} ), ~~\text{for} ~~l \in \mathcal{N}_i, 
\end{equation}
otherwise $\bm{s}_{il}(\bm{p}) = \mathbf{0}_{M_i^{\min} \times 1}$. The correlation-aware function $\text{Top}( \bm{k}(\bm{P}_l,\bm{p} ) , M_i^{\min} ): \mathbb{R}^{|\mathcal{D}_l|} \times \mathbb{N} \to \mathbb{R}^{M_i^{\min}}$ selects the first $M_i^{\min}$ largest elements from its input vector, which require $\mathcal{O}(M_l\log(M_l))$ for sorting the values, thereby retaining the most relevant elements based on their magnitudes. 

In order to further dilute the computation time for obtaining the aggregation weights, correlation-aware GP with average elements (Cora-GP-Avg) is developed. In this case, the function $\bm{s}_{il}(\cdot): \mathbb{R}^{3m} \to \mathbb{R}$ simply normalizes the sum of the elements of the vector $\bm{k}(\bm{P}_l,\bm{p} )$ denoted as follows 
\begin{equation} \label{eqn_Cora-GP_Avg}
    \bm{s}_{il}(\bm{p}) = \frac{ \boldsymbol{1}_{n}^\top \bm{k}\big(\bm{P}_l,\bm{p} \big)}{M_l}, ~~\text{for} ~~l \in \mathcal{N}_i, 
\end{equation}
otherwise $\bm{s}_{il}(\bm{p}) = 0$, where this operation only requires $\mathcal{O}(M_l)$ for each agent $l$. Compared to Cora-GP-Top, GoGP-Avg offers the advantage of faster processing as it eliminates the need for sorting values. However, it still effectively captures the underlying correlation relationships.

Within the proposed cooperative learning framework, the posterior mean $\tilde{\mu}_{ij}\left ( \bm{p} \right )$ aggregates the weighted prediction only from the neighbors of agent $i$, which is guaranteed by using the elements $\check{a}_{ij}(r(t))$ of the matrix $\check{\bm{A}}_{r{(t)}}$. Therefore, it does only use information accessible through the communication topology defined by the graph $\mathcal{G}_{r(t)}$. The function $\bm{h}_i(\cdot)$ determines the weights for aggregation with the property $\sum_{d=1}^n h_{id}(\cdot) = 1$. The construction of the weights $h_{id}(\bm{p})$ leads to a dependency on $\bm{k}(\bm{P}_i,\bm{p})$, therefore, it reflects the correlation between inputs and training data. To enhance comprehension of the algorithm's procedure, we furnish a pseudo-code in \cref{alg_cap}. 

The Cora-GP approach has the advantage that its aggregation scheme does not require the posterior variance of individual GPs, but relies solely on $\bm{k}(\bm{P}_i,\bm{p} )$. These vectors are already computed when determining the individual mean functions \eqref{eq:indmean}, such that they come at no additional computational cost. Thereby, the computational complexity in each agent for predictions is reduced to $\mathcal{O}(M_i)$ with Cora-GP-Avg and $\mathcal{O}(M_i\log(M_i))$ with Cora-GP-Top considering the training data set $\mathcal{D}_i$ in contrast to previous works, where this complexity is $\mathcal{O}(M_i^2)$ \cite{deisenrothDistributedGaussianProcesses2015,yangDistributedLearningConsensus2021,ledererCooperativeControlUncertain2023}. Additionally, the functions ${h}_{id}(\cdot)$ in \eqref{function_h} have the beneficial property that their sum equals one. This allows us to derive uniform prediction error bounds under the following additional assumption.
{
\begin{algorithm}[t]
\caption{Cora-GP algorithm}\label{alg_cap}
\begin{algorithmic}
\Require $n \geq 2$  \Comment{number of agents}
\Require OPTION: Choose Cora-GP-Avg or Cora-GP-Top
\If{OPTION \textbf{is} Cora-GP-Top}
    \State Obtain $M_{i}^{min}$
\EndIf
\For{$i =  1 :n$}
\State Calculate $ \bm{k}(\bm{P}_i,\bm{p}) $
\For{$l \in \mathcal{N}_i$}
\State Calculate $ \bm{k}(\bm{P}_l,\bm{p}) $
\State   Update $\bm{s}_{il}(\bm{p})$  based on OPTION via \eqref{eqn_Cora-GP_Top} or \eqref{eqn_Cora-GP_Avg}
\EndFor
\State $\bar{w}_i$, ${w}_{il}(\bm{p} )$ $\leftarrow$  \cref{bar_w_i}, \cref{w_il}
\State  ${h}_{id}( \bm{p})$ $\leftarrow$ \cref{function_h}
\State  $\tilde{\mu}_{ij}\left ( \bm{p} \right )$ $\leftarrow$\cref{mu_Cora-GP}
\EndFor
\end{algorithmic}
\end{algorithm}
}
\begin{assumption} \label{ass_f}
Every component $f_j(\bm{p})$ of the unknown function $ \bm{f}\left ( \bm{p} \right ) $ in \eqref{dyn_agent} with Lipschitz constant $L_{f}$ is a sample obtained from a Gaussian process $ \mathcal{GP}\left ( 0, k\left ( \bm{p},\bm{p}' \right )  \right )  $ with Lipschitz continuous kernel $k\!:\!\mathbb{R}^{3m}\!\times\!\mathbb{R}^{3m}\!\rightarrow\!\mathbb{R}_{0,+}$. 
\end{assumption}
\looseness=-1
This assumption is not restrictive in practice since it merely defines a prior distribution over plausible functions $\bm{f}(\cdot)$ \cite{Lederer2019}. This distribution usually covers a large class of functions, e.g., for squared exponential kernels the support of the distribution corresponds to the continuous functions on a compact set \cite{VanderVaart2011}. Based on this assumption, the following uniform prediction error bound for Cora-GPs can be derived. \looseness=-1

\begin{lemma} \label{lem_bound}
For a compact set for $\bm{p}$ as $  \Omega \in \mathbb{R}^m $, consider the unknown function $\bm{f}(\cdot)$ in \eqref{dyn_agent} satisfying \cref{ass_f} and GPs with the training data set $ {\mathcal{D}}_i$ satisfying \cref{ass_data}, $ \forall i = 1,2,\dots,n $.
Pick $\tau\in\mathbb{R}_+$, $\delta\in(0,1)$ such that $\min_{\bm{p}\in\Omega}\sigma_{ij}^2(\bm{p})\geq \gamma_{ij}^2(\tau)/\varphi (\tau,\delta)$, $\forall j=1,\ldots,m$,
	\begin{align}\label{eq:beta}
	\varphi (\tau,\delta)&=2{m}\log\left( \frac{r_{\Omega}\sqrt{m}}{2\tau} \right)-2\log(\delta),\\
	\gamma_{ij}(\tau)&=(L_f+L_{\mu_{ij}})\tau+\sqrt{\varphi (\tau,\delta) L_{\sigma_{ij}^2}\tau},
	\end{align}
for $r_{\Omega}=\max_{\bm{p},\bm{p}'\in\Omega}\|\bm{p}-\bm{p}'\|$, $L_{\mu_{ij}}$ and $L_{\sigma_{ij}^2}$ the Lipschitz constants of the individual GP mean and variance functions, respectively. Then, with probability of at least $(1-\delta)^m$, it holds 
for the proposed Cora-GP method in \eqref{mu_Cora-GP} 
that
$\|\bm{f}-\tilde{\bm{\mu}}_i\| \leq \tilde{\eta}_{i}(\bm{p},\delta)$ for all $i=1,\ldots,m$, where $\tilde{\bm{\mu}}_{i} := [ \tilde{{\mu}}_{i1},\dots,  \tilde{{\mu}}_{im}]^\top$ and $\tilde{\eta}_{i}(\bm{p},\delta) =  \| [{\eta}_{i1}(\bm{p},\delta),\dots,{\eta}_{im}(\bm{p},\delta) ]\|$   with
\begin{align}\label{eq:un_bound}
		\eta_{ij}(\bm{p},\delta)= 2\sqrt{\varphi \left(\tau,{\delta}/{n}\right)}\bm{h}_i( \bm{p})^\top \bm{\sigma}^{j}(\bm{p}),
\end{align}
where $j =1,\dots, m$, $\bm{\sigma}^{j}(\bm{p}) = [{\sigma}_{1j}(\bm{p}),\dots,{\sigma}_{nj}(\bm{p})]^\top$.
\end{lemma}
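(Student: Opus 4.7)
The plan is to build up from the standard GP pointwise confidence bound to the uniform aggregated bound in \eqref{eq:un_bound} through three successive steps: a single-agent, single-dimension pointwise bound; its extension to a uniform bound via a covering argument and Lipschitz continuity; and finally aggregation via the weight function $\bm{h}_i$ combined with union bounds over agents and dimensions.

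First I would fix a dimension $j$ and an agent $l$, and recall the classical pointwise GP confidence bound: conditioning on $\mathcal{D}_l$, the posterior of $f_j(\bm{p})$ is Gaussian with mean $\mu_{lj}(\bm{p})$ and variance $\sigma_{lj}^2(\bm{p})$, so by the Gaussian tail estimate $\Pr\{|f_j(\bm{p})-\mu_{lj}(\bm{p})|\geq \beta\sigma_{lj}(\bm{p})\}\leq 2\exp(-\beta^2/2)$ for any $\bm{p}\in\Omega$. Next I would discretize $\Omega$ by a $\tau$-grid of cardinality at most $(r_\Omega\sqrt{m}/(2\tau))^m$ and apply a union bound so that the pointwise bound with $\beta=\sqrt{\varphi(\tau,\delta)}$ holds simultaneously on all grid points with probability at least $1-\delta$, for the choice of $\varphi$ in \eqref{eq:beta}. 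Using the Lipschitz constants $L_f$, $L_{\mu_{lj}}$ and $L_{\sigma_{lj}^2}$ to interpolate from grid points to the whole of $\Omega$ then yields the uniform inflated bound $|f_j(\bm{p})-\mu_{lj}(\bm{p})|\leq \sqrt{\varphi(\tau,\delta)}\sigma_{lj}(\bm{p})+\gamma_{lj}(\tau)$ on $\Omega$. Finally, the lemma's hypothesis $\min_{\bm{p}}\sigma_{lj}^2(\bm{p})\geq \gamma_{lj}^2(\tau)/\varphi(\tau,\delta)$ collapses $\gamma_{lj}(\tau)$ into $\sqrt{\varphi(\tau,\delta)}\sigma_{lj}(\bm{p})$, delivering the clean bound $|f_j(\bm{p})-\mu_{lj}(\bm{p})|\leq 2\sqrt{\varphi(\tau,\delta)}\sigma_{lj}(\bm{p})$ uniformly in $\bm{p}$.

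Now I would apply a second union bound over the $n$ agents, replacing $\delta$ by $\delta/n$, so that for the fixed dimension $j$ the uniform single-agent bound holds simultaneously for all $l=1,\ldots,n$ with probability at least $1-\delta$. Aggregating via \eqref{mu_Cora-GP} and exploiting the two crucial properties of $\bm{h}_i$, namely nonnegativity of each $h_{id}$ and $\sum_{d=1}^n h_{id}(\bm{p})=1$, one obtains
\begin{align}
|f_j(\bm{p})-\tilde{\mu}_{ij}(\bm{p})|
&=\Bigl|\sum_{d=1}^n h_{id}(\bm{p})\bigl(f_j(\bm{p})-\mu_{dj}(\bm{p})\bigr)\Bigr| \nonumber\\
&\leq \sum_{d=1}^n h_{id}(\bm{p})\,|f_j(\bm{p})-\mu_{dj}(\bm{p})| \nonumber\\
&\leq 2\sqrt{\varphi(\tau,\delta/n)}\,\bm{h}_i(\bm{p})^\top\bm{\sigma}^j(\bm{p}) = \eta_{ij}(\bm{p},\delta). \nonumber
\end{align}
This is exactly the per-dimension, per-agent aggregated bound. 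Stacking the $m$ coordinates, the Euclidean norm of the residual vector is controlled by $\|\bm{f}(\bm{p})-\tilde{\bm{\mu}}_i(\bm{p})\|\leq \sqrt{\sum_{j=1}^m \eta_{ij}^2(\bm{p},\delta)}=\tilde{\eta}_i(\bm{p},\delta)$.

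For the probability bookkeeping, I would exploit the assumed independence of the scalar GPs used across the $m$ output dimensions: each dimension's uniform aggregated bound holds with probability at least $1-\delta$ by the argument above, and independence across dimensions upgrades the joint statement to probability at least $(1-\delta)^m$, which matches the lemma. The main obstacle is the covering/Lipschitz step, since one must carefully control the discretization error in both $f_j-\mu_{lj}$ and $\sigma_{lj}^2$ so that it gets absorbed by the additive $\gamma_{lj}(\tau)$ term, and then verify that the minimum-variance hypothesis is precisely what is needed to eliminate that term; the aggregation step itself is then essentially a convexity argument and comes for free from the normalization of $\bm{h}_i$.
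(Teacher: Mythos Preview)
Your proposal is correct and follows essentially the same route as the paper's proof: the paper also exploits the convex-combination identity $\sum_d h_{id}=1$ to bound $|f_j-\tilde{\mu}_{ij}|$ by a weighted sum of the individual errors, invokes the uniform single-agent bound $\sqrt{\varphi(\tau,\delta/n)}\sigma_{lj}+\gamma_{lj}(\tau)$ (citing prior work for the covering/Lipschitz step you spell out), absorbs $\gamma_{lj}$ via the minimum-variance hypothesis, and finally stacks the $m$ dimensions with independence to obtain the $(1-\delta)^m$ probability. The only cosmetic difference is ordering: the paper aggregates first and then applies the uniform bound componentwise, whereas you establish the single-agent uniform bound first and aggregate afterwards.
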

\begin{proof}
According to the Cora-GP algorithm \eqref{mu_Cora-GP}, it is trivial to show the $j$-th dimensional unknown $\bm{f}(\cdot)$ prediction error $|\Delta f_{ij}(\bm{p})|:=|f_j(\bm{p})-\tilde{\mu}_{ij}(\bm{p})|$ with the property in \eqref{function_h} is bounded by
\begin{align}
	&|\Delta f_{ij}(\bm{p})|= \left| \bm{h}_i(\bm{p})^\top \bm{\mu}^{j} (\bm{p}) - \bm{h}_i(\bm{p})^\top \mathbf{1}_n f_j(\bm{p}) \right|, \\
 &=\left| \bm{h}_i(\bm{p})^\top \big (\bm{\mu}^{j} (\bm{p}) - \mathbf{1}_n f_j(\bm{p} ) \big) \right|, \nonumber \\
	&\leq\bm{h}_i(\bm{p})^\top \big[|{\mu}_{1j} (\bm{p}) - f_j(\bm{p} )|,\dots, |{\mu}_{nj} (\bm{p}) - f_j(\bm{p} )| \big]^\top.\nonumber
\end{align}
Similarly to \cite{yangDistributedLearningConsensus2021}, we have the joint $j$-th prediction error of the $i$-th agent with the probability of at least $1-\delta$ bounded with
\begin{equation} \label{eq:beta_tau_bound}
    |\Delta f_{ij}(\bm{p})| \leq \bm{h}_i(\bm{p})^\top \big( \sqrt{\varphi \left(\tau,{\delta}/{n}\right)}\bm{\sigma}^{j}(\bm{p})\!+\!\bm{\gamma}^{j}(\tau)  \big),
\end{equation}
for $\tau\in\mathbb{R}_+$ and $\varphi (\tau,\delta)=2\log(M(\tau,\Omega)/\delta)$, where $\bm{\gamma}^{j}(\tau) = [\gamma_{1j}(\tau),\dots,\gamma_{nj}(\tau)]^\top$,  $M(\tau,\Omega)$ denotes the $\tau$-covering number of $\Omega$. By overapproximating $\Omega$ through a hypercube with edge length $r_{\Omega}$, the covering number $M(\tau,\Omega)$ can be bounded by $(r_{\Omega}\sqrt{m}/(2\tau))^{d}$, which yields identity \eqref{eq:beta}. Moreover, as $M_i$ is finite for all $i=1,\ldots,n$ and the training targets $\bm{y}_i^{(\cdot)}$ are perturbed by Gaussian noise, the posterior standard deviation is positive, i.e., there exists a ${\sigma}_{\min} \in \mathbb{R}_{0,+}$ such that $\sigma_{ij}(\bm{p})\geq {\sigma}_{\min}$ for all $\bm{p}\in\Omega$, $i=1,\ldots,n, ~j=1,\ldots,m$. Moreover, $\varphi (\tau,\delta)$ is monotonically decreasing in $\tau$, while $\gamma(\tau)$ is monotonically growing. Therefore, there exists a $\tau$ such that $\min_{\bm{p}\in\Omega}\sigma_{ij}^2(\bm{p})\geq{\sigma}_{\min}^2\geq  \gamma_{ij}^2(\tau)/\varphi (\tau,\delta)$ for all $i=1,\ldots,n, ~j=1,\ldots,m$, which, together with \eqref{eq:un_bound}, allows us to simplify \eqref{eq:beta_tau_bound} to $|\Delta f_{ij}(\bm{p})|\leq  \eta_{i,j}(\bm{p},\delta)$. 
Similar to \cite{umlauftUncertaintyBasedControlLyapunov2018}, using the fact that $  \bigcap_{j=1}^{m}  |\Delta f_{ij}(\bm{p})|\leq \eta_{i,j}(\bm{p},\delta)$, then
\begin{equation}
    \|\Delta\bm{f}_i\| \leq  \| [{\eta}_{i1}(\bm{p},\delta),\dots,{\eta}_{im}(\bm{p},\delta) ]  \|
\end{equation}
yields the result with the probability of at least $(1-\delta)^m$.
\end{proof}
This lemma establishes a uniform prediction error bound over the compact domain $\Omega$. For the derivation of the Lipschitz constants $L_{\mu_{ij}}$ and $L_{\sigma_{ij}^2}$, we refer to \cite{Lederer2019}. 

\section{Consensus Tracking under Switching Topology }\label{sec_ConsensusTracking}
To achieve consensus tracking with the ELMAS \eqref{dyn_agent} under switching semi-Markov topologies, for each agent $i~(i \in \mathcal{V})$, we let $\hat{\bm{f}}_i(\boldsymbol {p_i}) = \tilde{\boldsymbol{\mu}}_i(\boldsymbol {p_i}):=[\tilde{\mu}_{i1}(\boldsymbol {p_i}),\dots,\tilde{\mu}_{im}(\boldsymbol {p_i})]^\top$  be the prediction of the unknown dynamics $\bm{f}(\cdot)$ from agent $i$ obtained by a Cora-GP \eqref{mu_Cora-GP} and define a feedback linearizing distributed control law of the form 
\begin{equation} \label{controller2}
    \boldsymbol{u}_i =  {c}_i\boldsymbol{H}(\boldsymbol{q_i}) \boldsymbol{\nu}_{i}  + \boldsymbol{C}(\boldsymbol{q}_i, \dot{\boldsymbol{q}_i}) \dot{\boldsymbol{q}_i}+\boldsymbol{g}(\boldsymbol{q}_i)- \hat{\bm{f}}(\boldsymbol {p_i}),
\end{equation}
where $\boldsymbol{\nu}_{i}$ is the synchronization error defined by 
\begin{align}\label{eq:conslaw}
   \boldsymbol{\nu}_{i}&=  -\sum_{j=0}^n a_{ij}(r(t))[\alpha (\boldsymbol{q}_{i}-\boldsymbol{q}_{j}) + (\dot{\boldsymbol{q}_{i}}-\dot{\boldsymbol{q}_{j}}) ] \nonumber \\
   &= \alpha \Delta \bm{q}_{i} + \Delta \dot{\bm{q}}_{i},
\end{align}
where $ \Delta \bm{q}_{i}:= \sum_{j=0}^n a_{i j}(r(t))(\boldsymbol{q}_{j}-\boldsymbol{q}_{i})$ is the consensus tracking error corresponding to the $i$-th agent and its derivative $ \Delta \dot{\bm{q}}_{i}:=\sum_{j=0}^n a_{ij}(r(t))(\dot{\boldsymbol{q}}_{j}-\dot{\boldsymbol{q}}_{i})$.  The constant coefficient $\alpha \in \mathbb{R}_+$ is set to be positive.

Since the Cora-GP predictions \eqref{mu_Cora-GP} and the synchronization error \eqref{eq:conslaw} rely only on locally available information or values accessible via the communication network, the control law \eqref{controller2} can be implemented in a distributed fashion. Moreover, due to the error bounds for Cora-GP predictions in Lemma~\ref{lem_bound}, we can guarantee that \eqref{controller2} achieves consensus tracking. This is shown in the following theorem.
\begin{theorem}\label{theorem1}
Consider an ELMAS consisting of $n$ agents described by \eqref{dyn_agent} and a virtual leader described by \eqref{dyn_laeder} under \cref{ass_leader} with switching topologies $\bar{\mathcal{G}}_{r(t)}$ satisfying \cref{ass_switchinggraph}, where $r(t)$ is governed by a semi-Markov process with the finite state space $\mathcal{P}=\{1,2,\ldots,N \}$ and jump times $\tau$ following the distribution functions $F_{ij}(t)$. By using the proposed distributed learning control law \eqref{controller2} with the chosen gain $c_i\in\mathbb{R}_{+}$, employing Cora-GP algorithm based on the agent data sets $\mathcal{D}_i$, $i=1,\dots,n$ satisfying \cref{ass_f}, the ELMAS achieves consensus tracking with probability $(1-\delta)^m$, $\delta \in (0, 1)$, if  
\begin{equation} \label{eq_Upsilon}
    \bm{\Phi}_1 \!=\! \begin{bmatrix}
  \min_{ r(t) \in\mathcal{P}} \underline{\sigma}(\bm{c}\tilde{\mathbf{L}}_{r(t)} \!-\! \alpha\bm{I}_{nm})& -\frac{(1+{\alpha }^2  )}{2}\\
  -\frac{(1+{\alpha }^2  )}{2}& {\alpha}
\end{bmatrix} \!\succ \!0,\!
\end{equation}
is satisfied, where  $\bm{c}= \operatorname{diag}(c_1\bm{I}_m,\dots,c_n\bm{I}_m)$ and $\tilde{\mathbf{L}}_{r(t)}\! =\! \tilde{\bm{L}}_{r(t)} \otimes \bm{I}_m$. Then the consensus tracking of the ELMAS achieves with the overall consensus tracking error 
\begin{equation}
    \bar{\boldsymbol{e}} =  [ {\bm{e}}_1^\top,{\bm{e}}_2^\top, ~\cdots ~ {\bm{e}}_n^\top, \dot{\bm{e}}_1^\top, \dot{\bm{e}}_2^\top, ~\cdots ~ \dot{\bm{e}}_n^\top  ]^\top
\end{equation}
bounded with
\begin{align} \label{trackingerror_bound}
    \| \bar{\boldsymbol{e}} \|\leq \frac{(1+\alpha)\| \bm{\Phi}_2 \|}{2 \min_{r(t)} \underline{\sigma}(\tilde{\boldsymbol{L}}_{r(t)}) \sqrt{ (\underline{\sigma}(\bm{\Phi}_1) + 1/2) }},
\end{align}
where 
\begin{equation}\label{eq_Upsilon2}
    \bm{\Phi}_2 \!= \!\Big[ \max_{r\in\mathcal{P}} \bar{\sigma}(\tilde{\boldsymbol{L}}_{r(t)}) \Big( \frac{\tilde{\eta}(\bm{p},\delta)}{\min_{\bm{q}\in \mathbb{X}}\bar{\sigma}(\bm{H}(\bm{q})) } + \sqrt{n}\bar{f}_r \Big) ~ 0 \Big].\!
\end{equation}

\end{theorem}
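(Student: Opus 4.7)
The plan is to (i) derive the closed-loop error dynamics by substituting \eqref{controller2} into \eqref{dyn_agent}, (ii) construct a composite quadratic Lyapunov function that couples $\bm{e}$ and $\dot{\bm{e}}$ through the synchronization variable $\bm{\nu}_i$, (iii) use the positivity condition $\bm{\Phi}_1\succ 0$ to obtain a drift inequality that holds uniformly across all modes $r(t)\in\mathcal{P}$, and (iv) translate the resulting synchronization-error bound back to the stacked tracking error $\bar{\bm{e}}$ by inverting the Laplacian $\tilde{\mathbf{L}}_{r(t)}$, whose nonsingularity is guaranteed by \cref{ass_switchinggraph}.

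First I would plug \eqref{controller2} into \eqref{dyn_agent}; the known $\bm{C}\dot{\bm{q}}_i$ and $\bm{g}(\bm{q}_i)$ terms cancel and the dynamics reduce to $\ddot{\bm{q}}_i = c_i\bm{\nu}_i + \bm{H}(\bm{q}_i)^{-1}(\bm{f}(\bm{p}_i)-\hat{\bm{f}}(\bm{p}_i))$. Stacking across agents and using $\Delta\bm{q}_i = -[\tilde{\mathbf{L}}_{r(t)}\bm{e}]_i$ (the leader term collapses because $\bm{q}_j-\bm{q}_i = \bm{e}_j-\bm{e}_i$ and $\bm{e}_0=\bm 0$), the closed-loop error equation becomes $\ddot{\bm{e}} = -\bm{c}\tilde{\mathbf{L}}_{r(t)}(\alpha\bm{e}+\dot{\bm{e}}) + \Delta\bm{f} - \mathbf{1}_n\otimes\ddot{\bm{q}}_0$, with $\Delta\bm{f}$ collecting the Cora-GP prediction errors weighted by $\bm{H}^{-1}$.

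Next I would take the composite Lyapunov candidate $V(\bar{\bm{e}}) = \tfrac12\dot{\bm{e}}^\top\dot{\bm{e}} + \alpha\,\bm{e}^\top\dot{\bm{e}} + \tfrac{1+\alpha^2}{2}\bm{e}^\top\bm{e}$, which is positive definite for $\alpha>0$ and whose eigenvalues produce the factor $\underline{\sigma}(\bm{\Phi}_1)+\tfrac12$ appearing in the final denominator. Differentiating $V$ along the closed-loop trajectories and regrouping via the identity $\alpha\bm{e}+\dot{\bm{e}}$ that underlies $\bm{\nu}_i$ yields a quadratic form in $(\|\bar{\bm{e}}\|,\,\|\Delta\bm{f}\|+\|\mathbf{1}\otimes\ddot{\bm{q}}_0\|)^\top$ whose leading matrix is precisely $\bm{\Phi}_1$; the off-diagonal entry $-(1+\alpha^2)/2$ arises after Cauchy-Schwarz on the cross terms coupling $\alpha\bm{e}^\top\Delta\bm{f}$ and $\dot{\bm{e}}^\top\Delta\bm{f}$ with $\dot{\bm{e}}^\top\ddot{\bm{q}}_0$. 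Invoking \cref{lem_bound} to bound $\|\Delta\bm{f}\|\le \tilde{\eta}(\bm{p},\delta)/\min_{\bm{q}\in\mathbb{X}}\bar{\sigma}(\bm{H}(\bm{q}))$ with probability at least $(1-\delta)^m$, and \cref{ass_leader} to bound $\|\mathbf{1}_n\otimes\ddot{\bm{q}}_0\|\le\sqrt{n}\,\bar f_r$, packages the inhomogeneous contributions into $\|\bm{\Phi}_2\|$ as in \eqref{eq_Upsilon2}.

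For the stochastic switching, I would note that \eqref{eq_Upsilon} is imposed mode-wise, so the drift estimate is independent of $r(t)$; applying the weak infinitesimal generator (in the Dynkin sense for the piecewise-deterministic semi-Markov process) preserves the bound $\mathcal{L}V \le -\underline{\sigma}(\bm{\Phi}_1)\|\bar{\bm{e}}\|^2 + \|\bar{\bm{e}}\|\,\|\bm{\Phi}_2\|$, from which standard stochastic ultimate-boundedness yields a residual set for $\|\alpha\bm{e}+\dot{\bm{e}}\|$ and hence for the synchronization vector $\tilde{\mathbf{L}}_{r(t)}(\alpha\bm{e}+\dot{\bm{e}})$. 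Inverting this relation using $\underline{\sigma}(\tilde{\bm{L}}_{r(t)})$ and taking the worst mode $\min_{r(t)}\underline{\sigma}(\tilde{\bm{L}}_{r(t)})$ produces the factor in the denominator of \eqref{trackingerror_bound}; the prefactor $(1+\alpha)/2$ arises from converting back from $V$ to $\|\bar{\bm{e}}\|$ via the eigenvalue estimate on the composite form. The \textbf{main obstacle} is the algebraic bookkeeping that makes $\bm{\Phi}_1$ emerge with exactly the stated off-diagonal entry while the inhomogeneous terms aggregate cleanly into $\bm{\Phi}_2$ rather than something more conservative; the semi-Markov aspect, once per-mode uniformity of the drift is secured by \eqref{eq_Upsilon}, reduces to a standard Dynkin-generator argument and is comparatively routine.
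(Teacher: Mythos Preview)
Your proposal contains a genuine conceptual error in identifying what the $2\times 2$ matrix $\bm{\Phi}_1$ acts on. You claim that differentiating your composite Lyapunov function ``yields a quadratic form in $(\|\bar{\bm{e}}\|,\,\|\Delta\bm{f}\|+\|\mathbf{1}\otimes\ddot{\bm{q}}_0\|)^\top$ whose leading matrix is precisely $\bm{\Phi}_1$''. This cannot be right: the $(2,2)$ entry of $\bm{\Phi}_1$ is $\alpha>0$, so your claim would require $\dot V$ to contain a term $-\alpha\|\text{disturbance}\|^2$, which never arises from differentiating a state-only Lyapunov function along a system driven by an exogenous input. In the paper, $\bm{\Phi}_1$ does \emph{not} couple a state with a disturbance; it couples two state-like quantities, namely $\bm{\rho}=[\|\bm{\nu}\|,\,\|\Delta\mathbf{q}\|]^\top$, where $\bm{\nu}$ is the stacked synchronization error and $\Delta\mathbf{q}$ the stacked consensus error. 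The disturbance enters only \emph{linearly} through the row vector $\bm{\Phi}_2$, which is why its second entry is $0$.

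Relatedly, the paper's Lyapunov function is $V=\tfrac12\|\bm{\nu}\|^2+\tfrac12\|\Delta\mathbf{q}\|^2$, i.e.\ it lives in the Laplacian-weighted synchronization coordinates rather than directly in $(\bm{e},\dot{\bm{e}})$ as you propose. This choice is what produces the specific constants in the stated bound: the $\bar{\sigma}(\tilde{\bm{L}}_{r(t)})$ factor inside $\bm{\Phi}_2$ comes from $\dot{\bm{\nu}}$ carrying an explicit $\tilde{\mathbf{L}}_{r(t)}$ in front of the residual $\bm{H}^{-1}\Delta\bm{f}-\ddot{\mathbf{q}}_l$, and the $\min_{r(t)}\underline{\sigma}(\tilde{\bm{L}}_{r(t)})$ in the denominator of \eqref{trackingerror_bound} comes from the \emph{final} step of inverting $[\Delta\mathbf{q}^\top,\Delta\dot{\mathbf{q}}^\top]^\top=-(\bm{I}_2\otimes\tilde{\mathbf{L}}_{r(t)})\bar{\bm{e}}$ to recover $\|\bar{\bm{e}}\|$. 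With your $V$ in $(\bm{e},\dot{\bm{e}})$ neither factor would appear, so even if your calculation were repaired you would not land on \eqref{trackingerror_bound} as stated. Finally, the Dynkin/weak-generator machinery you invoke is unnecessary here: because \eqref{eq_Upsilon} and the bound on $\bm{\Phi}_2$ are imposed uniformly over $r(t)\in\mathcal{P}$ via $\min/\max$, the paper's drift inequality $\dot V\le -\bm{\rho}^\top\bm{\Phi}_1\bm{\rho}+\bm{\Phi}_2\bm{\rho}$ holds pathwise for every switching realisation, and the argument is essentially deterministic (Young's inequality followed by a comparison lemma), with the semi-Markov structure playing no role beyond defining the finite mode set.
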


\begin{proof}
Before delving into the analysis of system stability, we first present the collective dynamics of the ELMAS \eqref{dyn_agent} 
\begin{equation}
    \mathbf{H}(\mathbf{q})\ddot{\mathbf{q}} + \mathbf{C}({\mathbf{q}},\dot{\mathbf{q}})\dot{\mathbf{q}} + \mathbf{g}(\mathbf{q}) +  {\mathbf{f}}(\mathbf{p}) = \mathbf{u},
\end{equation}
where  $\mathbf{q} = [\bm{q}_1^\top,\dots,\bm{q}_n^\top]^\top$, ${\mathbf{f}}(\mathbf{p})= [{\bm{f}}(\boldsymbol {p_1})^\top,\dots,{\bm{f}}(\boldsymbol {p_n})^\top]^\top$, $\mathbf{p}\! =\![\bm{p}_1^\top,\dots,\bm{p}_n^\top]^\top$, $\mathbf{u} \!=\![\boldsymbol{u}_1^\top,\dots,\boldsymbol{u}_n^\top]^\top$, 
\begin{align*}
    \mathbf{C}({\mathbf{q}},\dot{\mathbf{q}}) &= \mathrm{blkdiag}\big(\boldsymbol{C}(\boldsymbol{q}_1, \dot{\boldsymbol{q}}_1),\dots, \boldsymbol{C}(\boldsymbol{q}_n, \dot{\boldsymbol{q}}_n) \big),\\
    \mathbf{H}(\mathbf{q}) &=\mathrm{blkdiag}(\boldsymbol{H}(\boldsymbol{q}_1)\cdots \boldsymbol{H}(\boldsymbol{q}_n)  ),\\
    \mathbf{g}(\mathbf{q}) &=[\boldsymbol{g}_1(\boldsymbol{q}_1)^{\top} \cdots\boldsymbol{g}_n(\boldsymbol{q}_n)^{\top}]^\top.
\end{align*}
Considering the distributed controller \eqref{controller2}, the collective control law as follows
\begin{equation}\label{controller_MAS}
    \mathbf{u} = \bm{c}\mathbf{H}(\mathbf{q})\bm{\nu} + \mathbf{C}(\mathbf{q},\dot{\mathbf{q}})\dot{\mathbf{q}} + \mathbf{g}(\mathbf{q}) -  \hat{\mathbf{f}}(\mathbf{p}),
\end{equation}
where $\hat{\mathbf{f}}(\mathbf{p})= [\hat{\bm{f}}(\boldsymbol {p_1})^\top,\dots,\hat{\bm{f}}(\boldsymbol {p_n})^\top]^\top,$  $\bm{\nu} \!=\! [\boldsymbol{\nu}_{1}^\top\!,\dots,\!\boldsymbol{\nu}_{n}^\top]^\top$.

We consider the Lyapunov candidate 
\begin{equation}
    V = \frac{1}{2} \bm{\nu}^\top\bm{\nu} +\frac{1}{2} \Delta\mathbf{q}^\top \Delta\mathbf{q},
\end{equation}
where $\Delta { \mathbf{q} } = [\Delta { \bm{q}}_1^\top,\dots,\Delta { \bm{q}}_n^\top]^\top$.
Considering the definition of synchronization error \eqref{eq:conslaw}, the derivative of $\bm{\nu}$ is
\begin{align}\label{eq_dotnu}
\dot{\bm{\nu}} = &~ \alpha \Delta\dot{\mathbf{q}} \\
    &+\tilde{\mathbf{L}}_{r(t)}\Big[\mathbf{H}(\mathbf{q})^{-1}(\mathbf{C}(\mathbf{q},\dot{\mathbf{q}})\dot{\mathbf{q}}+ \mathbf{g}(\mathbf{q}) - \mathbf{f}(\mathbf{p}) - \mathbf{u}) + \ddot{\mathbf{q}}_l \Big ], \nonumber
\end{align}
where $\ddot{\mathbf{q}}_l = \bm{I}_n\otimes\ddot{\bm{q}}_l$. Similarly, the derivative of $\Delta{\mathbf{q}}$ is derived as
\begin{equation}\label{eq_dotDeltaq}
    \Delta\dot{\mathbf{q}} = \bm{\nu} - ({\alpha}\bm{I}_{nm})\Delta\mathbf{q}
\end{equation}
Combining \eqref{eq_dotnu} and \eqref{eq_dotDeltaq}, one has
\begin{align}\label{Vdot}
    &\dot{V} = \bm{\nu}^\top \dot{\bm{\nu}} +\Delta\mathbf{q}^\top\Delta\dot{\mathbf{q}},\\
    &=(\alpha\bm{\nu}^\top + \Delta\mathbf{q}^\top)  (\bm{\nu} - ({\alpha}\bm{I}_{nm})\Delta\mathbf{q}) \nonumber\\
    &+ \bm{\nu}^\top \tilde{\mathbf{L}}_{r(t)} \Big[\mathbf{H}(\mathbf{q})^{-1}(\mathbf{C}(\mathbf{q},\dot{\mathbf{q}})\dot{\mathbf{q}} + \mathbf{g}(\mathbf{q}) - \mathbf{f}(\mathbf{p}) - \mathbf{u}) - \ddot{\mathbf{q}}_l \Big ]. \nonumber
\end{align}
Substituting the control law \eqref{controller_MAS} into \eqref{Vdot}, we have
\begin{align*}
    \dot{V} =& -\bm{\nu}^\top (\bm{c}\tilde{\mathbf{L}}_{r(t)} \!-\! {\alpha}\bm{I}_{nm} )\bm{\nu}\! -\! {\alpha}\Delta\mathbf{q}^\top \Delta\mathbf{q} - {\alpha }^2\bm{\nu}^\top \Delta\mathbf{q}\nonumber \\
    & +\!\Delta\mathbf{q}^\top \bm{\nu}\!-\!\bm{\nu}^\top \tilde{\mathbf{L}}_{r(t)}(\mathbf{H}(\mathbf{q})^{-1}\!\Delta\mathbf{f}(\mathbf{q}) \!-\! \ddot{\mathbf{q}}_l).
\end{align*}
 By employing Lemma \ref{lem_bound}, this expression can be bounded by
\begin{align}\label{Vdot3}
   &\dot{V}  \leq -\underline{\sigma}(\bm{c}\tilde{\mathbf{L}}_{r(t)} - \alpha\bm{I}_{nm} ) \|\bm{\nu}\|^2  \!+\! (1+{\alpha }^2 )\|\bm{\nu} \| \|\Delta\mathbf{q}\|  \\
    &-\! {\alpha}\|\Delta\mathbf{q}\|^2+ \bar{\sigma}(\tilde{\bm{L}}_{r(t)} ) \big(\bar{\sigma}(\mathbf{H}(\mathbf{q})^{-1}) \tilde{\eta}(\bm{p},\delta)  + \sqrt{n}\bar{f}_r \big) \|\bm{\nu}\|,\nonumber
\end{align}
where $\tilde{\eta}(\bm{p},\delta)\! =  \|[\tilde{\eta}_{1}(\bm{p}_1,\delta),\dots,\tilde{\eta}_{n}(\bm{p}_n,\delta) ]^\top\|$. By leveraging the fact that $\bar{\sigma}(H^{-1}) = 1/\underline{\sigma}(H)$, we have the reduced form of \eqref{Vdot3} as follows
\begin{align} \label{Vdot_2}
    \dot{V} \leq  -\bm{\rho}^\top \bm{\Phi}_1 \bm{\rho}+ \bm{\Phi}_2 \bm{\rho},
\end{align}
where $ \bm{\rho}=[ \| \bm{\nu} \|,  \|\Delta \mathbf{q} \|  ]^\top $,  and $\bm{\Phi }_1$ and $\bm{\Phi }_2$ are defined in \eqref{eq_Upsilon} and \eqref{eq_Upsilon2}, respectively. By using the Young's inequality, we have
\begin{equation}\label{Young_inequ}
    \| \bm{\Phi}_2 \| \|\bm{\rho} \| \leq \frac{\| \bm{\Phi}_2 \|^2+\|  \bm{\rho}\|^2}{2}.
\end{equation}
Substituting \eqref{Young_inequ} into \eqref{Vdot_2}, one has
\begin{equation}\label{Vdot_3}
    \dot{V} \leq -(\underline{\sigma}(\bm{\Phi}_1) + \frac{1}{2})\| \bm{\rho}\|^2 + \frac{\| \bm{\Phi}_2 \|^2}{2}.
\end{equation}
Furthermore, from \eqref{Vdot_3} follows the fact \cite[Lemma 1]{zeng-guanghouDecentralizedRobustAdaptive2009} that 
\begin{align}
{V}(t) \leq & V(0) \exp (-\tilde{\Phi}_1 t) + \frac{\| \bm{\Phi}_2 \|^2 }{2\tilde{\Phi}_1}\big(1- \exp (-\tilde{\Phi}_1t)\big),
\end{align}
where $\tilde{\Phi}_1=\underline{\sigma}(\bm{\Phi}_1) + {1}/{2}$ and $V(0)$ indicates the value of $V$ at time $t=0$. Therefore, there exist a bounded $ \upsilon \in \mathbb{R}_+$ and $T_e>0$ such that $\{ V(t)| \left | V(t) \right | \leq \upsilon \}$ when $t\geq T_e$. 
Using the facts that 
\begin{equation}
     \| [ \Delta\mathbf{q}, \Delta\dot{\mathbf{q}}  ] \| \leq  (1+ \alpha ) \| \bm{\rho} \|
\end{equation}
due to \eqref{eq_dotDeltaq} and the overall consensus tracking error $ \bar{\boldsymbol{e}}$ satisfies
\begin{equation}
      \big[ \Delta\mathbf{q}^\top ~ \Delta\dot{\mathbf{q}}^\top  \big]^\top   = -\big(\boldsymbol{I}_2 \otimes \tilde{\mathbf{L}}_{r(t)}  \big) \bar{\boldsymbol{e}},
\end{equation}
we have
\begin{align}
    \| \bar{\boldsymbol{e}}\| \leq \frac{(1+\alpha)\| \bm{\Phi}_2 \|}{2  \min_{r(t)} \underline{\sigma}(\tilde{\boldsymbol{L}}_{r(t)}) \sqrt{ (\underline{\sigma}(\bm{\Phi}_1) + 1/2) }},
\end{align}
is bounded, which concludes the proof.  
\end{proof}
From Theorem 1, we can derive several insightful properties. One crucial observation is that the high connectivity of the graph $\bar{\mathcal{G}}_{r(t)}$, characterized by relatively large singular values of $\tilde{\boldsymbol{L}}_{r(t)}$ at state $r(t)$, leads to a diminished tracking error. Therefore, the guaranteed tracking error bound is dictated by the graph's lowest level of connectivity observed during the process. Additionally, it is noteworthy that the matrix $\bm{\Phi}_1$ associated with the control gains also appears in the denominator of \eqref{trackingerror_bound}. This implies that by appropriately designing the values of $c_i$ and $\alpha$, it is possible to achieve arbitrarily small ultimate tracking error bounds.

\section{Simulation}\label{sec_Simulation}

For demonstrating the effectiveness of the proposed control law \eqref{controller2} based on Cora-GP predictions \eqref{mu_Cora-GP}, 
we consider 4 homogeneous 2-link robotic manipulators as described in \cite{murrayMathematicalIntroductionRobotic2017}.  It is assumed that the 4 manipulators possess the same parameters, in particular, point masses for the links $m_1=m_2=1$ kg, length of the links $l_1 = l_2 = 1 $ m. The unknown dynamics $\bm{f}(\cdot)$ is chosen as $${\boldsymbol{f}}(\boldsymbol{p}) =  
[q_{2}\sin(4q_{2})\!+\!\cos(q_{1}),
q_{2}\sin(0.2q_{1}^2) \!+\! \cos(q_{1})
]^T.$$ A total of $1150$ training samples is collected by sampling $\bm{q}$ from the domain $[-1,1]^2$ and perturbing the resulting output measurements $\bm{f}(\bm{p})$ by zero-mean Gaussian noise with $\sigma_0=0.1$. The samples of the state $\bm{q}$ are unevenly distributed among the agents as shown in \cref{fig_data}, leading to $M_1 = 350,~M_2 =250,~M_3 = 300,~M_4 = 250$. Moreover, the trajectory of the virtual leader is chosen as ${q}_{0 1}(t)=0.8\cos (0.02 \pi t),~{q}_{0 2}(t)=-0.8\sin (0.02 \pi t)$. The control gains are set to $\alpha =2$ and $c_i = 2$, respectively, and the factor to $\sigma_{g_i} = 0.15$, $\forall i \in \mathcal{V}$. We use a set of $6$ switching topologies, which are shown in \cref{fig_switch}. In particular, we choose ${F}_{i j}(\tau) \sim e^{-0.5 \tau}$. The initial states of each dimension of $\boldsymbol{q}$ and $\dot{\boldsymbol{q}}$ are randomly uniform distribution in the interval $[0,1.6]$ and $[-0.8, 0.8]$, respectively, for each agent. The initial probability of entering the six states is given randomly as $[0.1705 \quad  0.2073 \quad 0.0045 \quad 0.1863 \quad  0.2456 \quad 0.1858]$. 
The transition probability matrix $\mathbf{P}_r$ is randomly chosen and the switching signals satisfy \cref{ass_irr}.
\begin{align*}
\mathbf{P}_r =  \begin{bmatrix}
0.00 & 0.10 & 0.20 & 0.10 & 0.50 & 0.10 \\
0.22 & 0.00 & 0.28 & 0.06 & 0.28 & 0.17 \\
0.07 & 0.14 & 0.00 & 0.36 & 0.07 & 0.36 \\
0.15 & 0.23 & 0.31 & 0.00 & 0.08 & 0.23 \\
0.08 & 0.23 & 0.23 & 0.15 & 0.00 & 0.31 \\
0.06 & 0.23 & 0.18 & 0.24 & 0.29 & 0.00
\end{bmatrix},
\end{align*}
which determines the probability of one state entering another state. 

\begin{figure}[t]
\vspace{4pt}
\centering  
\begin{subfigure}[b]{0.55\columnwidth}
\input{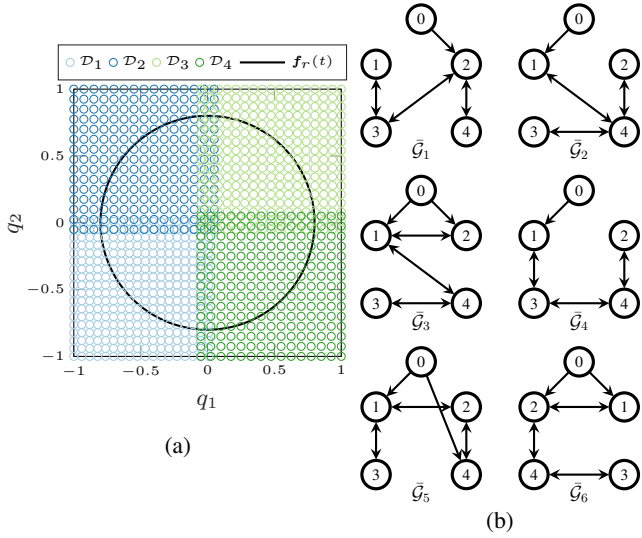}
\caption{}
\label{fig_data}
\vspace{1.0cm}
\end{subfigure}\hfill
\begin{subfigure}[b]{0.43\columnwidth}
    \definecolor{mycolor1}{RGB}{31,120,180}
\definecolor{mycolor2}{RGB}{208,28,139}

\begin{tikzpicture}[scale=0.6,every node/.append style={transform shape}]
\tikzstyle{agent} = [very thick, circle, minimum width = 0.5cm, minimum height=0.5cm,text centered, draw = black ,fill= white ]
\tikzstyle{arrow1} = [thick, <->,>=stealth]
\tikzstyle{arrow} = [thick, ->,>=stealth]
\node (leader1) [agent, xshift=-3.5cm,yshift=1cm] {\normalsize
 0};
\node (agent1) [agent, xshift=-4.5cm] {\normalsize
 1};
\node (agent2) [agent, xshift=-2.5cm] {\normalsize
 2};
\node (agent3) [agent, yshift=-1.5cm, xshift=-4.5cm] {\normalsize
  3};
\node (agent4) [agent, yshift=-1.5cm, xshift=-2.5cm] {\normalsize
 4};
\node (leader2)[agent, yshift=1cm] {\normalsize
 0};
\node (agent5) [agent, xshift=-1cm] {\normalsize
 1};
\node (agent6) [agent, xshift=1cm] {\normalsize
 2};
\node (agent7) [agent, yshift=-1.5cm, xshift=-1cm] {\normalsize
  3};
\node (agent8) [agent, yshift=-1.5cm, xshift=1cm] {\normalsize
 4};
\node (leader3)[agent, xshift=-3.5cm, yshift=-2.8cm] {\normalsize
 0};
\node (agent9) [agent, xshift=-4.5cm, yshift=-3.8cm] {\normalsize
 1};
\node (agent10) [agent, xshift=-2.5cm, yshift=-3.8cm] {\normalsize
 2};
\node (agent11) [agent, yshift=-5.3cm, xshift=-4.5cm] {\normalsize
  3};
\node (agent12) [agent, yshift=-5.3cm, xshift=-2.5cm] {\normalsize
 4};
\node (leader4) [agent, yshift=-2.8cm] {\normalsize
 0};
\node (agent13) [agent, xshift=-1cm,yshift=-3.8cm,] {\normalsize
 1};
\node (agent14) [agent, xshift=1cm,yshift=-3.8cm,] {\normalsize
 2};
\node (agent15) [agent, yshift=-5.3cm, xshift=-1cm] {\normalsize
  3};
\node (agent16) [agent, yshift=-5.3cm, xshift=1cm] {\normalsize
 4};
\node (leader5) [agent, xshift=-3.5cm, yshift=-6.6cm] {\normalsize
 0};
\node (agent17) [agent, xshift=-4.5cm,yshift=-7.6cm,] {\normalsize
 1};
\node (agent18) [agent, xshift=-2.5cm,yshift=-7.6cm,] {\normalsize
 2};
\node (agent19) [agent, yshift=-9.1cm, xshift=-4.5cm] {\normalsize
  3};
\node (agent20) [agent, yshift=-9.1cm, xshift=-2.5cm] {\normalsize
 4};
\node (leader6) [agent, yshift=-6.6cm] {\normalsize
 0};
\node (agent21) [agent, xshift=1cm,yshift=-7.6cm,] {\normalsize
 1};
\node (agent22) [agent, xshift=-1cm,yshift=-7.6cm,] {\normalsize
 2};
\node (agent23) [agent, yshift=-9.1cm, xshift=1cm] {\normalsize
  3};
\node (agent24) [agent, yshift=-9.1cm, xshift=-1cm] {\normalsize
 4};
\draw [arrow]    (leader1) to (agent2);
\draw [arrow1]   (agent1) to  (agent3);
\draw [arrow1]   (agent2) to  (agent3);
\draw [arrow1]   (agent2) to  (agent4);
\draw [arrow]    (leader2) to (agent5);
\draw [arrow1] (agent6) to  (agent8);
\draw [arrow1] (agent7) to  (agent8);
\draw [arrow1] (agent5) to  (agent8);

\draw [arrow]    (leader3) to (agent10);
\draw [arrow]    (leader3) to (agent9);
\draw [arrow1] (agent9) to  (agent10);
\draw [arrow1] (agent9) to  (agent12);
\draw [arrow1] (agent11) to  (agent12);

\draw [arrow]    (leader4) to (agent13);
\draw [arrow1] (agent13) to  (agent15);
\draw [arrow1] (agent14) to  (agent16);
\draw [arrow1] (agent15) to  (agent16);
\draw [arrow]    (leader5) to (agent17);
\draw [arrow]    (leader5) to (agent20);
\draw [arrow1] (agent17) to  (agent18);
\draw [arrow1] (agent17) to  (agent19);
\draw [arrow1] (agent18) to  (agent20);
\draw [arrow]    (leader6) to (agent21);
\draw [arrow]    (leader6) to (agent22);
\draw [arrow1] (agent21) to  (agent22);
\draw [arrow1] (agent23) to  (agent24);
\draw [arrow1] (agent22) to  (agent24);
\draw node   [xshift=-3.5cm,yshift=-1.9cm] 
{\large$\bar{{\mathcal{G}}}_1$};
\draw node   [yshift=-1.9cm]{\large$\bar{{\mathcal{G}}}_2$};
\draw node   [xshift=-3.5cm,yshift=-5.7cm] 
{\large$\bar{{\mathcal{G}}}_3$};
\draw node   [yshift=-5.7cm] 
{\large$\bar{{\mathcal{G}}}_4$};
\draw node   [xshift=-3.5cm,yshift=-9.5cm] 
{\large$\bar{{\mathcal{G}}}_5$};
\draw node   [yshift=-9.5cm] 
{\large$\bar{{\mathcal{G}}}_6$};

\end{tikzpicture}
    \vspace{-0.6cm}
    \caption{}
    \label{fig_switch}
\end{subfigure}
\vspace{-0.1cm}
\caption{(a) Switching communication topologies; (b) The agent $i$ has the data set $\mathcal{D}_i$. Meanwhile, the leader's trajectory $\bm{f}_r(t)$ crosses the whole data area. }
\end{figure}

\begin{figure}[t]
\vspace{4pt}
\centering
\input{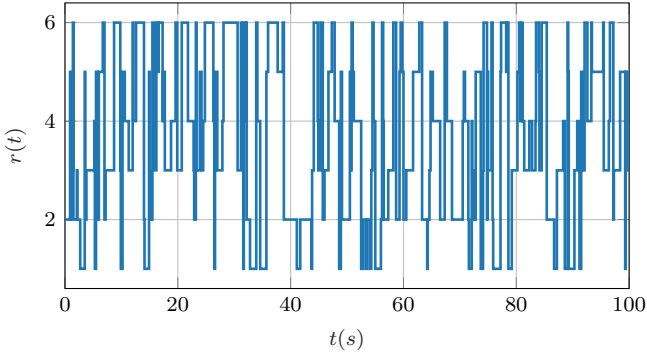}
\vspace{-0.8cm}
\caption{Switching states.}
\label{signals}
\end{figure}

In order to demonstrate the high control performance and a reduction in computational complexity, we compare the proposed control law \eqref{controller2} to the same control law with cooperative GPs (CGP) as proposed in \cite{yangDistributedLearningConsensus2021}, and individual GP (IGP) meaning each agent estimates the uncertainties with its own prediction independently. The corresponding trajectory of the norm of the overall tracking errors $\bar{\bm{e}}$ of one trail over the whole simulation time is illustrated in \cref{fig_errors}. The disparity in errors becomes evident when comparing the scenario without GP to the others. Notably, CGP, Cora-GP-Top, and Cora-GP-Avg exhibit substantial reductions in errors compared to the IGP method. Moreover, the Mont-Carlo test with $100$ times from different initial conditions demonstrates that both Cora-GP-Top and Cora-GP-Avg achieve equivalent predictive accuracy of CGP while circumventing the computational overhead associated with calculating the posterior variance of GP in \cref{fig_MCtest}. \looseness=-1
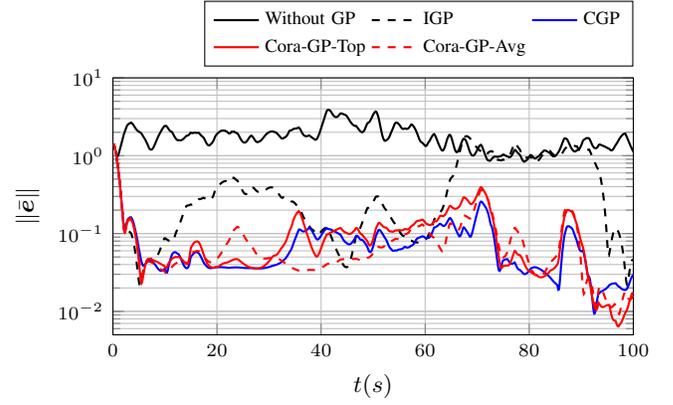
\begin{figure}[t] 
    \centering
    \def\file{tikz/Errors.txt}
    \begin{tikzpicture}
        \begin{axis}[xlabel={$t(s)$},ylabel={$\| \bar{\bm{e}} \|$}, xmin=0, xmax = 100, ymin=0.005, ymax=10,ymode=log,yminorticks=true,xmajorgrids, ymajorgrids, yminorgrids,legend columns=3,
            width=8.5cm,height=5cm,legend style={at={(1,1.3)}},
            clip mode=individual]
            \addplot[black, thick] table[x = timeLogger , y  = NormErrorX_allLogger ]{\file};
            \addplot[black, thick, dashed]    table[x = timeLogger , y  = NormErrorX_allIndivLogger ]{\file};
            \addplot[blue, thick]    table[x = timeLogger , y  = NormErrorX_allCoopLogger ]{\file};
            \addplot[red, thick]    table[x = timeLogger , y  = NormErrorX_allTopLogger ]{\file};
            \addplot[red, thick, dashed]    table[x = timeLogger , y  = NormErrorX_allAvgLogger ]{\file};
            \legend{Without GP$~$, IGP, CGP, Cora-GP-Top, Cora-GP-Avg}
        \end{axis}
    \end{tikzpicture}
    \vspace{-0.6cm}
    \caption{
        Tracking error plots for the simulated scenarios.
    }
    \label{fig_errors}
\end{figure}
\looseness=-1
\begin{figure}
    \centering
    \includegraphics[scale=0.62]{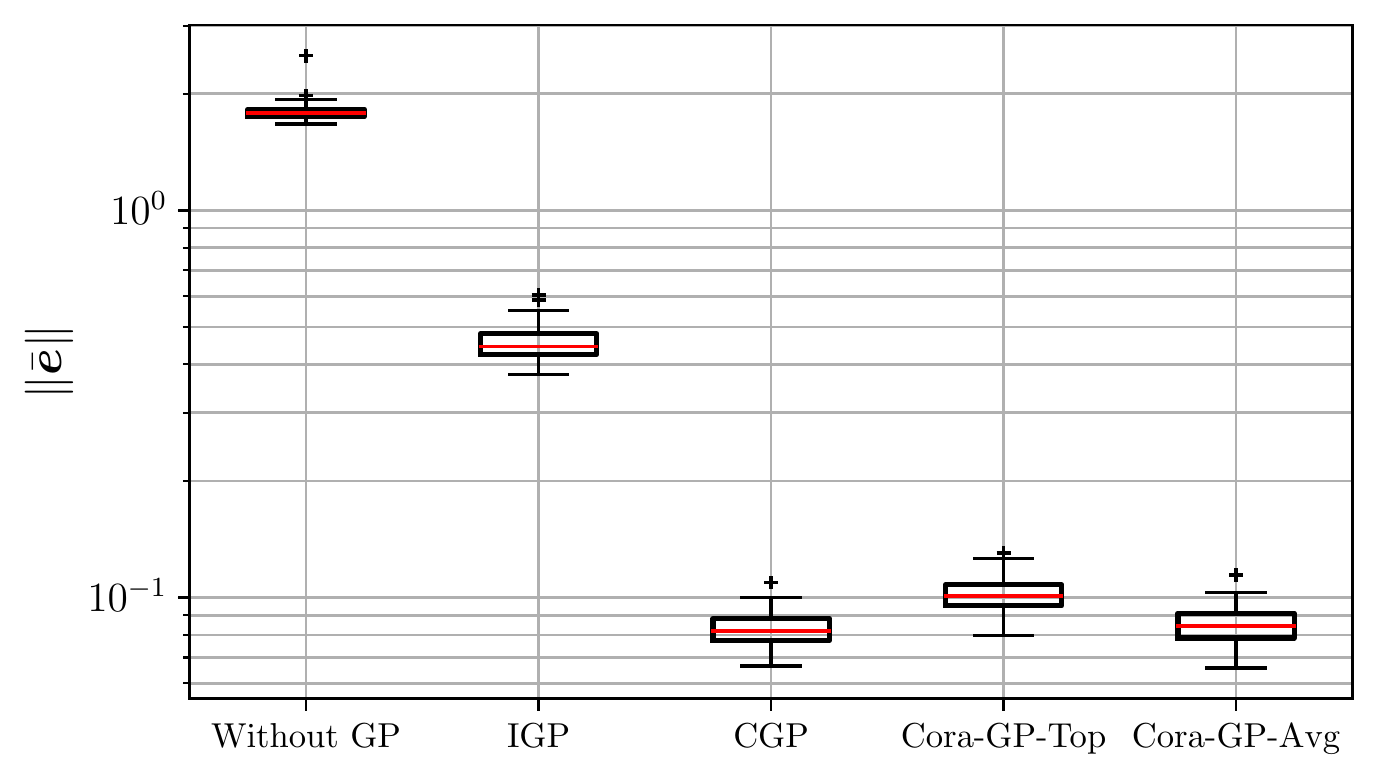}
    \caption{The mean tracking error of different approaches.}
    \label{fig_MCtest}
\end{figure}

\cref{table_time} presents the computational time of obtaining aggregation weights for different approaches over $1000$ computations. The Cora-GP-Top method exhibits an average time reduction of $99.71\%$ compared to CGP, while the Cora-GP-Avg method achieves an even more significant average time reduction of $99.97\%$ compared to CGP. In summary, the proposed Cora-GP framework for cooperative learning in MAS performs similarly against CGP without suffering heavy computation. It enhances the efficiency of obtaining aggregation weights in a crucial aspect when deploying GP-based cooperative learning methodologies. This improvement is of particular significance for applications demanding high prediction rates.
\begin{table}[!t]
\renewcommand{\arraystretch}{1.3}
\caption{Computation time of aggregation weights.}
\centering
\begin{tabular}{l||c||c}
\hline
\bfseries Approach & \bfseries Mean (ms) & \bfseries Median (ms)\\
\hline\hline
Cora-GP-Avg & $\boldsymbol{0.002}$ & $\boldsymbol{0.001}$\\
Cora-GP-Top & $0.019$ & $0.028$\\
CGP & $6.50$ & $6.30$\\
IGP & - & -\\
\hline
\end{tabular}
\label{table_time}
\end{table}

\section{Conclusion}\label{sec_Conclusion}
This paper introduces a distributed consensus tracking control law incorporated with a novel GP-based cooperative learning framework for uncertain ELMASs. The results show a significant stride forward in enhancing the efficiency and efficacy of aggregation weight strategies for cooperative learning. With the proposed learning approaches, the protocol ensures convergence of tracking errors within guaranteed bounds, even when faced with semi-Markov switching communication topologies.

\balance
\bibliography{ref} 

\end{document}